\newcommand{\uft}[1]{\footnote{\url{#1}}}
\begin{document}

\noindent\fbox{%
    \parbox{\textwidth}{%
        \centering
        \Large The Future of MEV \\
        \large \textit{An Analysis of Ethereum Execution Tickets} \\
        \normalsize By: \href{https://www.linkedin.com/in/jonah-burian/}{Jonah Burian} (03/05/2024)
    }
}

\begin{abstract}
This paper analyzes the Execution Tickets proposal\uft{https://ethresear.ch/t/execution-tickets/17944} on Ethereum Research, unveiling its potential to revolutionize the Ethereum blockchain's economic model. At the core of this proposal lies a novel ticketing mechanism poised to redefine how the Ethereum protocol distributes the value associated with proposing execution payloads.\uft{https://eth2book.info/capella/part3/containers/execution/} This innovative approach enables the Ethereum protocol to directly broker Maximal Extractable Value (MEV),\uft{https://ethereum.org/en/developers/docs/mev} traditionally an external revenue stream for validators. The implementation of Execution Tickets goes beyond optimizing validator compensation; it also introduces a new Ethereum native asset with a market capitalization expected to correlate closely with the present value of all value associated with future block production. The analysis demonstrates that the Execution Ticket system can facilitate a more equitable distribution of value within the Ethereum ecosystem, and pave the way for a more secure and economically robust blockchain network.
\end{abstract}

\subsubsection*{Acknowledgments}
\textit{I am grateful to Aleks Larsen, Barnabé Monnot, Davide Crapis, Justin Drake, Martin Rosen, Mike Neuder, and others at Blockchain Capital\uft{https://www.blockchaincapital.com/} and the Ethereum Foundation\uft{https://ethereum.foundation/} for their insightful discussions and thorough reviews.}

\section{Introduction}

\subsection{MEV Dynamics in Blockchain Consensus}
\subsubsection{MEV: Conceptual Overview and Historical Context}
The Ethereum\uft{https://ethereum.org/} blockchain operates using a proof-of-stake consensus mechanism. In this system, a network of nodes, known as validators, maintains the Ethereum Virtual Machine (EVM).\uft{https://ethereum.org/developers/docs/evm} At each slot, a validator is selected to propagate a block, with the validator's probability of being chosen directly proportional to its staked ETH amount. The block contains, among other data, an execution payload, namely a list of ordered transactions. It is noteworthy that there are typically more transactions available to be placed in a block than there is space within the block, and the selection and ordering of these transactions present a combinatorial problem.

With the implementation of EIP-1559,\uft{https://github.com/ethereum/EIPs/blob/master/EIPS/eip-1559.md} Ethereum introduced a dual fee structure comprising base fees and priority fees. While base fees are burned, contributing to Ethereum's deflationary aspect, priority fees are given to validators as an incentive for block production. On the surface, block construction to maximize revenue sounds simple: just choose the transactions with the highest priority fees. However, block construction to maximise revenue is far more complex than this naive strategy because there exists an additional source of revenue: Maximal Extractable Value (MEV). MEV arises from a validator's ability to choose and order transactions. The most prevalent MEV strategies include sandwich attacks, arbitrage, and liquidations. A typical example of a sandwich attack occurs when a validator spots a buy order for inclusion in a block and strategically places its own transactions before and after it (i.e., front-running and back-running the order). Arbitrage might involve exploiting price differences between a Centralized Exchange (CEX) and a Decentralized Exchange (DEX). 

\subsubsection{Execution Layer Rewards}
Execution Layer Rewards (EL Rewards) encapsulate the total value a validator can earn from proposing an execution payload. This term comprehensively includes the revenue from two primary sources: fees + MEV.

\subsubsection{Implications of MEV on Validator Economics}
Over the years, MEV has evolved into a complex landscape, reminiscent of high frequency trading, where various entities engage in high-pressure, time-sensitive, combinatorially complex economic games. Recognizing that validators may not be best equipped for these intricate games, Flashbots\uft{https://www.flashbots.net/} introduced a new mechanism called MEV-Boost.\uft{https://docs.flashbots.net/flashbots-mev-boost/introduction} This mechanism, also known as Proposer Builder Separation (PBS),\uft{https://arxiv.org/abs/2305.19037} distinguishes the block-building function from the proposing function. Validators can opt into an auction where builders bid for the right to choose the execution payload. Validators then propose the execution payload with the highest associated bid. Since the builders select the payload, they receive the EL Rewards minus the bid, while the validator receives the bid. Effectively, builders are paying validators for the right to construct an execution payload. Given the competitiveness of the market and the short term monopoly the validator has on block production, the winning bid ends up being a little shy of the value of the EL Rewards.

\subsection{Reassessing the Distribution of EL Rewards}
This situation brings to the forefront the issue of how the total value of EL Rewards, encompassing both MEV and transaction fees, is distributed. Validators currently benefit from the full spectrum of these rewards, but this raises questions about the optimal allocation of value within the Ethereum ecosystem. Is it more beneficial for the network's long-term health and security if a portion of these EL Rewards is captured and redirected to the protocol itself? This would require somehow brokering MEV payments at the protocol level.

The potential redistribution of EL Rewards could lead to a more equitable and balanced economic model within Ethereum, enhancing network security and aligning with its deflationary goals. The challenge lies in determining an effective mechanism for capturing a part of the EL Rewards without undermining the incentives necessary for validators.

One approach to better distribute EL Rewards would be to enshrine PBS,\uft{https://ethresear.ch/t/why-enshrine-proposer-builder-separation-a-viable-path-to-epbs/15710} also known as ePBS, at the protocol level and to burn the bid using a mechanism called MEV-Burn.\uft{https://ethresear.ch/t/mev-burn-a-simple-design/15590} While this proposal has been hotly debated, an orthogonal approach has surfaced. This new proposal will be the focus of this paper.

\subsection{The Execution Tickets Mechanism}
Drake and Neuder have proposed a new MEV capture method that creates two distinct roles with separate reward systems and an auction-based selection process.\uft{https://ethresear.ch/t/execution-tickets/17944}

\subsubsection{Separation of Duties in Ethereum's Network}
\begin{itemize}
    \item \textbf{Beacon Block Proposer}: 

        This proposer role is similar to that of current validators, except the Beacon Block Proposer does not construct the execution payload. Instead, it selects a subset of transactions that must be included. The Beacon Block Proposer is chosen randomly and proportionally to its staked amount of ETH just like before.

    \item \textbf{Execution Block Proposer}: 
    
        This proposer is tasked with constructing the execution payload and constrained by the requirement to include the transactions in the inclusion list. Selection occurs through an innovative auction system involving the sale and drawing of tickets. The likelihood of winning the lottery is equal to the number of tickets one holds over the total tickets. Winning the lottery confers the right to propose an execution payload and capture the associated EL Rewards. The winner can still outsource the construction.

\end{itemize}

\subsubsection{Operational Framework of the Lottery System}
The way the Execution Block Proposer lottery operates is as follows: When the protocol activates, \( n \) lottery tickets are minted and can be purchased from the protocol. For every block, one ticket is drawn, determining the Execution Block Proposer. After the winner proposes the execution payload, the winner's ticket is burned and a new one is minted which is available for anyone to buy. This process continues indefinitely.

\subsubsection{Economic Implications of the Lottery on Ethereum's Value}
These tickets not only confer the right to EL Rewards but also introduce a new native asset to Ethereum, potentially creating their own economy and market. Revenue from ticket sales is intended to be burned, applying deflationary pressure on ETH's supply, increasing ETH's value, and thereby enhancing network security. In essence, burning MEV contributes to Ethereum's security budget. The exact mechanism for selling these tickets is still an open research question (as the protocol will most likely only be able to capture the value of the tickets from primary, not secondary, sales).

\section{Motivation and Objectives}
This paper aims to delve into the Execution Block Proposer mechanism, examining its efficacy in capturing EL Rewards and assessing the potential market value of these tickets. It will also assess what the market cap of these tickets reveals about the MEV landscape. Finally, the paper will pose questions for further exploration. This paper will not explore how these tickets should be priced to capture their full value.

\section{Theoretical Framework}
\subsection{Modeling the MEV Ticket Economy}
\begin{itemize}
    \item \textbf{Scenario Description}: The game starts with \( n \) tickets. A ticket provides an opportunity to win a prize (EL Reward, i.e., MEV + fees) at discrete time intervals (slot), denoted by \( t \). At each increment, 1 ticket wins. Winning results in the ticket being voided (burned) and excluded from subsequent draws. A single fresh ticket is generated (minted) following the award of the prize at each interval.
    \item \textbf{Winning Probability}: The probability of a single ticket winning the prize at each time interval \( t \) is \( \frac{1}{n} \).
    \item \textbf{Reward at Time \( t \)}: Victory at time $t$ yields a prize with a value represented by the random variable $\mathcal{R}$.

    Note: In practice, the distribution of EL Rewards (\( \mathcal{R} \)) may change over time, and the values \( r_i \) and \( r_j \) across intervals might be dependent.\footnote{If a block contains a significant amount of MEV, this might suggest that the following block could also have a significant amount of MEV, e.g., when rates are cut and DEXs are mispriced.} For simplicity, it is assumed that \( \mathcal{R} \) has a distribution that does not vary with time and that each draw is independent.

    \item \textbf{Discount Factor}: The value of the prize at time interval \( t \) is discounted to its present value using a discount factor of \( \frac{1}{(1+d)^t} \), where \( d \) represents the inter-slot discount rate. 
    For simplicity, it is assumed that \( d \) does not change with time.
    \item \textbf{Present Value}: The function \( \textit{PVal}(X,t) = \frac{X}{(1+d)^t} \)  will be used to represent the present value of some value X realized at time \(t\). For example, the present value of the EL Reward at time interval \( t \) is given by \( \textit{PVal}(\mathcal{R},t) = \frac{\mathcal{R}}{(1+d)^t} \).
    \item \textbf{Present Value of a ticket}: \( V_{ticket} \) denotes the Net Present Value (NPV) of a single ticket, which is also a random variable.
\end{itemize}

\section{Analytical Exploration}
\subsection{Valuation of Future EL Reward Streams}

    \textbf{Expected Net Present Value of the EL Rewards}: As \( \mathcal{R} \)  is a random variable, so is its present value \( \textit{PVal}(\mathcal{R},t) \). The expected present value is:
    \[ E[\textit{PVal}(\mathcal{R},t)] = \frac{E[\mathcal{R}]}{(1+d)^t} = \frac{\mu_{\mathcal{R}}}{(1+d)^t} = \textit{PVal}(\mu_{\mathcal{R}},t) \]
    Where \( \mu_{\mathcal{R}} \) is the expected value of \( \mathcal{R} \).

    \newtheorem{theorem}{Theorem}
    \begin{theorem}
    \label{theorem:NPV}
    The expected net present value of all future EL Rewards, 
    \[ NPV_{\mathcal{R}} = \frac{\mu_{\mathcal{R}}}{d} \]
    \end{theorem}
    \begin{proof}
        \begin{align*}
        NPV_{\mathcal{R}} &= E\Bigl[\sum_{t=1}^{\infty}\textit{PVal}(\mathcal{R},t)\Bigr] \\
        &= \sum_{t=1}^{\infty} E[\textit{PVal}(\mathcal{R},t)] \\
        &=\sum_{t=1}^{\infty} \textit{PVal}(\mu_{\mathcal{R}},t) \\
        &= \sum_{t=1}^{\infty} \frac{\mu_{\mathcal{R}}}{(1+d)^t} \\
        &= \mu_{\mathcal{R}}\sum_{t=1}^{\infty} \frac{1}{(1+d)^t} \\
        &= \frac{\mu_{\mathcal{R}}}{d}
        \end{align*} 
        
        Note:  \(\sum_{t=1}^{\infty} \frac{1}{(1+d)^t} = \frac{1}{d}\) because \(|1+d| > 1 \) (this a modified geometric series).
 
    \end{proof}

\subsection{Expected Value of an Individual Tickets}

\begin{theorem}
    The expected value of a single ticket, 
    \[ E[V_{ticket}] =  \frac{\mu_{\mathcal{R}}}{nd+1} \]
\end{theorem}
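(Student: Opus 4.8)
The plan is to track a single ticket from the present (just before slot $1$) until it wins for the first time, since a win burns the ticket and removes it from the game while the freshly minted replacement is sold to someone else and contributes nothing to the original holder. First I would observe that at each slot the ticket wins with probability $\frac{1}{n}$ independently, so the slot $T$ of its (unique) winning draw is geometrically distributed, $P(T=t) = \left(\frac{n-1}{n}\right)^{t-1}\frac{1}{n}$ for $t \ge 1$. Conditioned on winning at slot $t$, the ticket earns the reward $\mathcal{R}$, whose present value has expectation $\textit{PVal}(\mu_{\mathcal{R}},t) = \frac{\mu_{\mathcal{R}}}{(1+d)^t}$ by the computation preceding Theorem~\ref{theorem:NPV}.

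Then I would write the expected value as a single sum over the winning slot,
\[
E[V_{ticket}] = \sum_{t=1}^{\infty} P(T=t)\,E[\textit{PVal}(\mathcal{R},t)] = \frac{\mu_{\mathcal{R}}}{n}\sum_{t=1}^{\infty}\left(\frac{n-1}{n}\right)^{t-1}\frac{1}{(1+d)^t},
\]
using linearity of expectation together with the independence of the reward from the winning slot. This is a geometric series with ratio $\frac{n-1}{n(1+d)}$, which is strictly less than $1$ for $d>0$, so it converges; summing it and simplifying the resulting fraction yields $\frac{\mu_{\mathcal{R}}}{nd+1}$.

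Alternatively --- and this is the cleaner route I would actually present --- I would exploit the memorylessness of the setup with a one-step recursion. Let $V = E[V_{ticket}]$. At slot $1$ the ticket either wins (probability $\frac{1}{n}$), collecting a reward worth $\frac{\mu_{\mathcal{R}}}{1+d}$ in present value, or loses (probability $\frac{n-1}{n}$), after which the holder owns a statistically identical ticket whose value, discounted one slot back to the present, is $\frac{V}{1+d}$. This gives
\[
V = \frac{1}{n}\cdot\frac{\mu_{\mathcal{R}}}{1+d} + \frac{n-1}{n}\cdot\frac{V}{1+d},
\]
and solving this linear equation for $V$ produces the claimed formula directly.

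The main obstacle is not the algebra but pinning down the correct probabilistic model: I must justify that a ticket's contribution terminates at its first win rather than persisting, that each draw is an independent $\frac{1}{n}$ Bernoulli trial even as tickets are continually burned and reminted (the total supply stays fixed at $n$, so the per-ticket probability is unchanged), and that the time-invariance and independence of $\mathcal{R}$ assumed in the theoretical framework let me replace $\mathcal{R}$ by $\mu_{\mathcal{R}}$ inside the sum. Once the model is fixed, both the direct summation and the recursion are routine, and their agreement serves as a consistency check.
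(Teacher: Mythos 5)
Your proposal is correct, and it contains two arguments. Your first route --- conditioning on the winning slot $T$, which is geometric with parameter $\frac{1}{n}$, replacing $\mathcal{R}$ by $\mu_{\mathcal{R}}$ via independence, and summing the geometric series with ratio $\frac{n-1}{n(1+d)}$ --- is precisely the paper's proof, down to the same series and the same convergence remark. Your second route, the one-step recursion $V = \frac{1}{n}\cdot\frac{\mu_{\mathcal{R}}}{1+d} + \frac{n-1}{n}\cdot\frac{V}{1+d}$, is genuinely different from anything in the paper and does check out: solving gives $V\bigl(n(1+d)-(n-1)\bigr) = \mu_{\mathcal{R}}$, i.e.\ $V = \frac{\mu_{\mathcal{R}}}{nd+1}$. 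What the recursion buys is economy --- no infinite sum, and the memorylessness of the burned-and-reminted lottery (fixed supply $n$, i.i.d.\ rewards, constant $d$) is used once, transparently, rather than being buried in the form of $P(W=t)$. What it costs is a small amount of rigor you should make explicit if you present it: the recursion is only valid once you know $E[V_{ticket}]$ is finite (otherwise ``$V$'' cannot be manipulated as a number), so you should either note that the value is dominated by $\sum_{t\ge 1} \mu_{\mathcal{R}}/(1+d)^t = \mu_{\mathcal{R}}/d < \infty$, or argue that the discounted continuation term vanishes so the linear equation's unique fixed point is indeed the value. The paper's direct summation gets convergence for free from the explicit geometric series; your observation that the two routes agree is a legitimate consistency check. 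Your modeling caveats (contribution terminates at first win; per-ticket probability stays $\frac{1}{n}$ because supply is fixed) match the paper's framework exactly.
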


\begin{proof}
Using the Law of Total Expectations:
\begin{align*}
 E[V_{ticket}] = \sum_{t=1}^{\infty}  P(W=t) \cdot E[V_{ticket}|W=t]
\end{align*} where \(W  =t \) means the ticket win at time t. \\

Note that the probability of a ticket winning at time \(t\) is \(\frac{1}{n}\). Additionally, the probability that a ticket has not won through time \(t-1\) is given by \(\left(1 - \frac{1}{n}\right)^{t-1}\).\\

Therefore,
\begin{align*}
    E[V_{ticket}] &= \sum_{t=1}^{\infty} P(W=t) \cdot E[V_{ticket}|W=t] \\
    &= \sum_{t=1}^{\infty} \left(1 - \frac{1}{n}\right)^{t-1} \cdot \frac{1}{n} \cdot \textit{PVal}(\mu_{\mathcal{R}},t) \\
    &= \sum_{t=1}^{\infty}  \left(1 - \frac{1}{n}\right)^{t-1} \cdot \frac{1}{n} \cdot \frac{\mu_{\mathcal{R}}}{(1+d)^t} \\
    &= \frac{\mu_{\mathcal{R}}}{n(1+d)}  \cdot \sum_{t=1}^{\infty}  \left(1 - \frac{1}{n}\right)^{t-1} \cdot  \frac{1}{(1+d)^{t-1}} \\
    &= \frac{\mu_{\mathcal{R}}}{n(1+d)}  \cdot \sum_{t=1}^{\infty}  \left(\frac{n-1}{n(1+d)}\right)^{t-1} \\
    &= \frac{\mu_{\mathcal{R}}}{n(1+d)} \left(\frac{1}{1-(\frac{n-1}{n(1+d)})}\right)\\
    &= \frac{\mu_{\mathcal{R}}}{n(1+d)} \left(\frac{n(1+d)}{nd+1}\right)\\
    &= \frac{\mu_{\mathcal{R}}}{nd+1}\\
\end{align*}

Note: \(\sum_{t=1}^{\infty}  \left(\frac{n-1}{n(1+d)}\right)^{t-1} 
= \frac{1}{1-(\frac{n-1}{n(1+d)})} \) because \(|\frac{n-1}{n(1+d)}| < 1\) (this is a modified geometric series).
\end{proof}

\subsection{Aggregate Valuation of Issued and Unissued Tickets}
\subsubsection{Synthesizing Individual and Market-Wide Valuations}
Given the closed-form solution for the value of a ticket, the net present value of all tickets issued and unissued can be calculated.

\begin{theorem}
    The net present value of all tickets issued and unissued equals the expected net present value of the rewards.
    \[ E[V_{\text{all tickets}}] = NPV_{\mathcal{R}}\]
\end{theorem}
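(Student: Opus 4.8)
The plan is to express the net present value of all tickets as a sum of the expected values of the individual tickets. At any given moment the system contains exactly $n$ tickets in circulation (one is burned and one is minted at each slot), so the aggregate current value is the value of these $n$ tickets. By the previous theorem each ticket has expected value $\frac{\mu_{\mathcal{R}}}{nd+1}$, and since expectation is linear and every ticket faces a statistically identical future (each wins with probability $\frac{1}{n}$ at every draw, independent of its label), the total expected value should be
\[
E[V_{\text{all tickets}}] = n \cdot E[V_{ticket}] = \frac{n\,\mu_{\mathcal{R}}}{nd+1}.
\]

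The difficulty is that this quantity does not obviously equal $NPV_{\mathcal{R}} = \frac{\mu_{\mathcal{R}}}{d}$; indeed $\frac{n\mu_{\mathcal{R}}}{nd+1} < \frac{\mu_{\mathcal{R}}}{d}$ strictly. The resolution I would pursue is that $n \cdot E[V_{ticket}]$ values only the tickets \emph{currently} outstanding, whereas the claim concerns \emph{all tickets issued and unissued} — including every future ticket minted after each burn. So the main conceptual step is to account for the infinite stream of tickets that come into existence over time. I would set up the accounting so that exactly one ticket is minted per slot: a ticket minted at slot $t$ has, from the present vantage point, its winning probabilities and rewards discounted back from its birth date, contributing an extra discount factor relative to the initial $n$ tickets.

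Concretely, I would write $E[V_{\text{all tickets}}]$ as the value of the initial $n$ tickets plus the present value of each future minted ticket, and sum the resulting series. I expect the cleanest route is a \emph{conservation} argument that sidesteps summing the per-ticket series directly: at every slot the prize $\mathcal{R}$ is awarded to exactly one ticket, so the stream of rewards flowing to the entire (ever-growing) population of tickets is identical to the stream $NPV_{\mathcal{R}}$ of all EL Rewards from Theorem~\ref{theorem:NPV}. Since a ticket's value is by definition the present value of the rewards it eventually captures, and every reward is captured by exactly one ticket, summing ticket values must reproduce the total reward NPV. The main obstacle will be making this double-counting-free bookkeeping rigorous — establishing that the present value of all rewards partitions cleanly across the set of all tickets with no reward unassigned and none counted twice — rather than the arithmetic, which should then collapse to the identity
\[
E[V_{\text{all tickets}}] = \sum_{t=1}^{\infty} E[\textit{PVal}(\mathcal{R},t)] = NPV_{\mathcal{R}} = \frac{\mu_{\mathcal{R}}}{d}.
\]
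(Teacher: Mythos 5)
Your proposal is correct, and it contains two routes: the first is exactly the paper's proof, while your preferred ``conservation'' argument is genuinely different. The paper decomposes $E[V_{\text{all tickets}}]$ into the $n$ issued tickets, worth $nE[V_{ticket}]$, plus one ticket minted per slot, each valued at $\textit{PVal}(E[V_{ticket}],t)=\frac{E[V_{ticket}]}{(1+d)^t}$, and sums the geometric series to get $E[V_{ticket}]\cdot\frac{nd+1}{d}=\frac{\mu_{\mathcal{R}}}{d}$ --- precisely the setup you describe in your ``concretely'' step before pivoting away from it. Your conservation argument instead observes that each slot's reward is captured by exactly one ticket (and each ticket, since it wins almost surely, captures exactly one reward), so summing present-valued ticket payoffs over the entire ticket population is just a re-indexing of the reward stream, giving $E[V_{\text{all tickets}}]=E\bigl[\sum_{t=1}^{\infty}\textit{PVal}(\mathcal{R},t)\bigr]=NPV_{\mathcal{R}}$ directly by Theorem~\ref{theorem:NPV}; the rigor worry you flag is easily dispatched, since the ticket-slot matching is a bijection and the expectation-sum interchange follows from nonnegativity of rewards (Tonelli). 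Each approach buys something: yours is more elementary and more general --- it never invokes the closed form $E[V_{ticket}]=\frac{\mu_{\mathcal{R}}}{nd+1}$, so it survives relaxing the stationarity and independence assumptions on $\mathcal{R}$ that the per-ticket formula requires --- whereas the paper's route doubles as a consistency check on its per-ticket valuation and makes visible how value splits between issued tickets (fraction $\frac{nd}{nd+1}$) and unissued ones, which is exactly the decomposition its later discussion of choosing $n$ relies on.
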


\begin{proof}
\begin{align*}
    E[V_{\text{all tickets}}] &=  E[V_{\text{issued tickets}}] +  E[V_{\text{unissued tickets}}] \\
    &=  nE[V_{ticket}] +  \sum_{t=1}^{\infty} \textit{PVal}(E[V_{ticket}],t) \\
    &=  nE[V_{ticket}] +  \sum_{t=1}^{\infty} \frac{E[V_{ticket}]}{(1+d)^t} \\
    &=  nE[V_{ticket}] +  \frac{E[V_{ticket}]}{d} \\
    &=  E[V_{ticket}] \left(\frac{nd + 1}{d}\right) \\
    &=  \left(\frac{\mu_{\mathcal{R}}}{nd+1}\right) \left(\frac{nd+1}{d}\right) \\
    &= \frac{\mu_{\mathcal{R}}}{d} \\
    &= NPV_{\mathcal{R}}
\end{align*}
\end{proof}
Therefore, assuming an efficient market, Execution Tickets capture all \( \mathcal{R} \) in the system in expectation.

\subsubsection{Market Efficiency and EL Reward Capture}
A critical aspect of the system's design involves determining the optimal quantity of tickets to issue (the optimal \( n \)) and assessing its impact on the value captured by the system from the outset.

\begin{theorem}
\label{theorem:large-n}
If n is sufficiently large, the current market cap of all tickets equals the present value of all future EL Rewards
\end{theorem}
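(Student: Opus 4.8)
The plan is to first pin down what the ``current market cap of all tickets'' should mean. The natural reading is the aggregate present value of only the $n$ tickets currently in circulation, since the unissued tickets have not yet been minted and so contribute nothing to the market cap \emph{today}. Because the $n$ issued tickets are symmetric (each wins with probability $\frac{1}{n}$ at each draw), linearity of expectation gives the market cap as $n \cdot E[V_{ticket}]$. I would state this identification explicitly, as it is the crux of the argument.

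Next I would substitute the closed form from the preceding theorem, $E[V_{ticket}] = \frac{\mu_{\mathcal{R}}}{nd+1}$, to obtain
\[ \text{Market Cap} = n \cdot E[V_{ticket}] = \frac{n\,\mu_{\mathcal{R}}}{nd+1}. \]
The final step is a routine limit: dividing numerator and denominator by $n$ yields $\frac{\mu_{\mathcal{R}}}{d + 1/n}$, which converges to $\frac{\mu_{\mathcal{R}}}{d} = NPV_{\mathcal{R}}$ as $n \to \infty$. Hence for $n$ sufficiently large the current market cap is arbitrarily close to the present value of all future EL Rewards, which is exactly the claim.

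The main obstacle is conceptual rather than computational. The immediately preceding theorem already showed that issued \emph{and} unissued tickets together capture $NPV_{\mathcal{R}}$ exactly for \emph{every} $n$, so the reader must not conflate the two statements. The content here is specifically that the unissued portion, namely $\frac{E[V_{ticket}]}{d} = \frac{\mu_{\mathcal{R}}}{d(nd+1)}$, vanishes as $n$ grows, so the issued tickets \emph{alone} asymptotically capture the entire reward stream. I would make this point by exhibiting the unissued contribution and noting it is $O(1/n)$, which both clarifies the distinction from the earlier theorem and supplies intuition for why a larger ticket supply front-loads more value into the current market. The only genuine subtlety is making the informal phrases ``sufficiently large'' and ``equals'' precise, i.e. rephrasing the conclusion as the limit $\lim_{n\to\infty} \frac{n\,\mu_{\mathcal{R}}}{nd+1} = NPV_{\mathcal{R}}$.
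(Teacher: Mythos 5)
Your proposal is correct and takes essentially the same route as the paper: both identify the current market cap with $E[V_{\text{issued tickets}}] = nE[V_{ticket}] = \frac{n\mu_{\mathcal{R}}}{nd+1}$ and compute $\lim_{n \to \infty} \frac{n\mu_{\mathcal{R}}}{nd+1} = \frac{\mu_{\mathcal{R}}}{d} = NPV_{\mathcal{R}}$. Your added observation that the unissued portion $\frac{\mu_{\mathcal{R}}}{d(nd+1)}$ is $O(1/n)$ is a helpful clarification the paper leaves implicit, but the core argument is identical.
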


\begin{proof}
\begin{align*}
     \lim_{n \to \infty} E[V_{\text{issued tickets}}] &=  \lim_{n \to \infty}  nE[V_{ticket}] \\
     &= \lim_{n \to \infty} \left(\frac{n\mu_{\mathcal{R}}}{nd+1}\right)\\
     &= \lim_{n \to \infty}  \left(\frac{\mu_{\mathcal{R}}}{d+\frac{1}{n}}\right)\\
     &=  \frac{\mu_{\mathcal{R}}}{d} \\
    &=  NPV_{\mathcal{R}}
\end{align*}
\end{proof}

Effectively, tickets are EL Reward futures and the market cap of the tickets is a leading indicator of the future value that can be captured from proposing the execution payload.

\subsubsection{Impact of Ticket Quantity on Ticket Values}
While a large \( n \) is analytically appealing, it increases the expected timeframe until an individual ticket wins, which actually decreases the value of an individual ticket.

\begin{theorem}
\label{theorem:time-n}
The expected number of slots until a ticket wins is \( n \)
\end{theorem}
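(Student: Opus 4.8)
The plan is to recognize that the waiting time $W$, already defined in the previous proof as the slot at which a given ticket first wins, is a geometric random variable with success probability $p = \tfrac{1}{n}$. The earlier proof established its probability mass function, namely $P(W=t) = \left(1-\tfrac{1}{n}\right)^{t-1}\cdot\tfrac{1}{n}$, so I can take this as given. The quantity to compute is therefore $E[W] = \sum_{t=1}^{\infty} t\cdot P(W=t)$, and the whole theorem reduces to evaluating this one series and simplifying.

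First I would write out
\[
E[W] = \sum_{t=1}^{\infty} t \left(1-\tfrac{1}{n}\right)^{t-1}\cdot\tfrac{1}{n} = \tfrac{1}{n}\sum_{t=1}^{\infty} t\, q^{\,t-1},
\]
where I set $q = 1-\tfrac{1}{n}$ to keep the algebra clean. The core obstacle is the arithmetico-geometric sum $\sum_{t=1}^{\infty} t\, q^{\,t-1}$, which is not a bare geometric series of the kind invoked in the preceding theorems. The cleanest justification is to start from the geometric identity $\sum_{t=0}^{\infty} q^{t} = \tfrac{1}{1-q}$ (valid since $|q|<1$ for $n\ge 1$) and differentiate both sides term by term in $q$, which yields $\sum_{t=1}^{\infty} t\, q^{\,t-1} = \tfrac{1}{(1-q)^2}$. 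Substituting $1-q = \tfrac{1}{n}$ gives $\tfrac{1}{(1-q)^2} = n^2$, so $E[W] = \tfrac{1}{n}\cdot n^2 = n$.

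As a faster and more self-contained alternative, I would consider a recursion exploiting the memorylessness of independent draws. Conditioning on the first slot, with probability $\tfrac{1}{n}$ the ticket wins immediately ($W=1$), and otherwise one slot has elapsed and the remaining wait has the same distribution as $W$; this gives $E[W] = \tfrac{1}{n}\cdot 1 + \left(1-\tfrac{1}{n}\right)(1 + E[W])$, which simplifies to $\tfrac{1}{n}E[W] = 1$ and hence $E[W] = n$. This sidesteps the series evaluation entirely, so I expect the only genuine subtlety to be the term-by-term differentiation (or, equivalently, justifying the recursive conditioning), both of which are standard for $|q|<1$ and can be stated in a single sentence in keeping with the paper's style.
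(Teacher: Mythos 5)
Your proposal is correct and follows the same route as the paper: both identify the waiting time as a geometric random variable with success probability $p = \tfrac{1}{n}$ and conclude $E[W] = \tfrac{1}{p} = n$. The only difference is that the paper simply cites the standard mean of the geometric distribution, while you derive it (twice, via term-by-term differentiation and via the memorylessness recursion) --- harmless extra detail, not a different approach.
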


\begin{proof}
Define \( T \) as a random variable denoting the number of slots required to win the lottery. \( T \) follows a geometric distribution since each slot represents an independent Bernoulli trial with a constant success probability \( p = \frac{1}{n} \).

\[ E(T) = \frac{1}{p} = \frac{1}{\frac{1}{n}} = n \]
\end{proof}

\begin{theorem}
\label{theorem:price-n}
The expected value of a ticket decreases with \( n \)
\end{theorem}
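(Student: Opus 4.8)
The plan is to use the closed-form expression for the expected value of a single ticket, namely $E[V_{ticket}] = \frac{\mu_{\mathcal{R}}}{nd+1}$, which was established earlier. Since $\mu_{\mathcal{R}}$ and $d$ are fixed positive constants that do not depend on $n$, the entire $n$-dependence lives in the denominator $nd+1$. The claim that $E[V_{ticket}]$ decreases with $n$ is therefore equivalent to the claim that $nd+1$ increases with $n$, which is immediate.

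First I would treat $E[V_{ticket}]$ as a function of a continuous real variable $n > 0$ and differentiate with respect to $n$. Writing $f(n) = \frac{\mu_{\mathcal{R}}}{nd+1}$, the quotient (or chain) rule gives
\begin{align*}
\frac{d}{dn} E[V_{ticket}] = \frac{d}{dn}\left(\frac{\mu_{\mathcal{R}}}{nd+1}\right) = -\frac{\mu_{\mathcal{R}}\, d}{(nd+1)^2}.
\end{align*}
The next step is to observe the sign of this derivative: because $\mu_{\mathcal{R}} > 0$ (it is an expected reward), $d > 0$ (a positive discount rate), and $(nd+1)^2 > 0$, the derivative is strictly negative for all $n > 0$. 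A strictly negative derivative implies $f$ is strictly decreasing, so increasing $n$ strictly decreases the expected value of an individual ticket, which is exactly the assertion.

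Alternatively, and perhaps more simply for a discrete quantity like the number of tickets, I would argue monotonicity directly: for integers $n_2 > n_1 \geq 1$ we have $n_2 d + 1 > n_1 d + 1 > 0$, and since $x \mapsto \mu_{\mathcal{R}}/x$ is decreasing on the positive reals, it follows that $\frac{\mu_{\mathcal{R}}}{n_2 d + 1} < \frac{\mu_{\mathcal{R}}}{n_1 d + 1}$, i.e.\ $E[V_{ticket}]$ at $n_2$ is smaller than at $n_1$. I would likely present the derivative argument as the main proof since it is the most transparent.

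I do not anticipate any genuine obstacle here, as the result is essentially immediate once the prior closed form is invoked; the only point requiring care is to state explicitly the sign assumptions ($\mu_{\mathcal{R}} > 0$ and $d > 0$) that make the conclusion hold, since without $d > 0$ the denominator would not vary and the monotonicity would fail. Thus the ``hard part'' is really just bookkeeping: making sure the positivity hypotheses are on the record and that the discrete-versus-continuous framing is handled cleanly.
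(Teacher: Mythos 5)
Your proposal is correct and matches the paper's proof essentially verbatim: the paper also differentiates the closed form $E[V_{ticket}] = \frac{\mu_{\mathcal{R}}}{nd+1}$ with respect to $n$ and observes that $-\frac{\mu_{\mathcal{R}}d}{(nd+1)^2} < 0$. Your explicit recording of the positivity assumptions on $\mu_{\mathcal{R}}$ and $d$, and the alternative discrete monotonicity argument, are sound but minor refinements of the same route.
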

\begin{proof}
\begin{align*}
     \frac{\partial}{\partial n} E[V_{\text{ticket}}] &= \frac{\partial}{\partial n} \left(\frac{\mu_{\mathcal{R}}}{nd+1}\right) \\
     &= - \frac{\mu_{\mathcal{R}}d}{(nd+1)^2} < 0
\end{align*}
\end{proof}

\subsection{Cost of Control}
\begin{theorem}
\label{theorem:cost-n}
The value of commanding \(P\%\) of the outstanding tickets increases in \( n \)
\end{theorem}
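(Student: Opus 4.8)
The plan is to reduce the statement to a one-variable monotonicity check. First I would make precise what ``commanding $P\%$ of the outstanding tickets'' costs. At any moment there are exactly $n$ outstanding (issued) tickets, so commanding a $P\%$ share means holding $\frac{P}{100}\, n$ of them. By the earlier result $E[V_{ticket}] = \frac{\mu_{\mathcal{R}}}{nd+1}$, the expected value of such a holding is
\[
C(n) \;=\; \frac{P}{100}\, n \cdot E[V_{ticket}] \;=\; \frac{P}{100}\cdot\frac{n\,\mu_{\mathcal{R}}}{nd+1}.
\]
Equivalently, this is just $\frac{P}{100}$ times the quantity $E[V_{\text{issued tickets}}] = n\,E[V_{ticket}]$, the market cap of all issued tickets that appeared in Theorem~\ref{theorem:large-n}.

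Next I would establish that $C(n)$ is increasing in $n$. The cleanest route is to rewrite $\frac{n}{nd+1} = \frac{1}{d + \frac{1}{n}}$, so that
\[
C(n) \;=\; \frac{P}{100}\cdot\frac{\mu_{\mathcal{R}}}{\,d + \frac{1}{n}\,}.
\]
Since $\frac{1}{n}$ strictly decreases in $n$, the denominator $d + \frac{1}{n}$ strictly decreases, and hence $C(n)$ strictly increases (taking $\mu_{\mathcal{R}}, d, P > 0$). Alternatively I would differentiate directly, using $\frac{\partial}{\partial n}\frac{n}{nd+1} = \frac{1}{(nd+1)^2} > 0$, which gives $C'(n) = \frac{P}{100}\cdot\frac{\mu_{\mathcal{R}}}{(nd+1)^2} > 0$.

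The conceptual point worth flagging — and the only place any care is needed — is that two opposing effects are in play: enlarging $n$ forces an attacker to buy more tickets (the factor $\frac{P}{100}\, n$ grows), while it simultaneously drives the per-ticket value $E[V_{ticket}]$ down (Theorem~\ref{theorem:price-n}). The content of the claim is that the first effect dominates, which the algebra above confirms because the aggregate issued-ticket value $n\,E[V_{ticket}]$ rises monotonically toward $NPV_{\mathcal{R}} = \frac{\mu_{\mathcal{R}}}{d}$ from below. So there is no genuine analytic obstacle; the main thing to get right is the modeling step, namely that a fixed percentage share corresponds to a number of tickets that itself scales with $n$.
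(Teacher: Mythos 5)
Your proof is correct and matches the paper's argument: you model the $P\%$ stake as $pn$ tickets with $p = \frac{P}{100}$, compute its expected value as $\frac{pn\mu_{\mathcal{R}}}{nd+1}$, and your direct differentiation yields exactly the paper's derivative $\frac{p\mu_{\mathcal{R}}}{(nd+1)^2} > 0$. The rewriting as $\frac{1}{d+\frac{1}{n}}$ and the remarks on the two competing effects are nice supplementary observations but do not change the substance.
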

\begin{proof}
Controlling \(P\%\) of the tickets involves holding \(pn\) tickets, where \(p = \frac{P}{100}\). Thus, the expected value for controlling \(P\%\) of the tickets is:
\begin{align*}
E[V_{\text{P\% of tickets}}] &= pnE[V_{\text{ticket}}] \\
&=\frac{pn\mu_{\mathcal{R}}}{nd+1}
\end{align*}

Differentiating with respect to \(n\) yields:
\begin{align*}
     \frac{\partial}{\partial n} E[V_{\text{P\% of tickets}}] &= \frac{\partial}{\partial n}\left( \frac{pn\mu_{\mathcal{R}}}{nd+1} \right) \\
     &= \frac{p\mu_{\mathcal{R}}}{(nd+1)^2} > 0
\end{align*}
\end{proof}

\section{Challenges and Limitations}

\subsection{Multi-Block MEV and Centralization Risks}
One of the most critical challenges to this study relates to the phenomenon of multi-block MEV. This concept posits that controlling multiple consecutive blocks can yield disproportionately higher rewards than the sum of the individual blocks' values.\footnote{With multi-block MEV, the value from controlling  slots \(i\) and \(i+1\) is greater than  \(r_i + \frac{r_{i+1}}{1+d}\). This means that \(E[V_{\text{P\% of tickets}}] > pnE[V_{\text{ticket}}] \) as P increases.
} If multi-block MEV is prevalent (though its existence in practice is not yet clear), the advantage gained from controlling successive blocks could lead to significant, potentially exponential, centralization pressures. This trend would be contrary to Ethereum's ethos of decentralization and could pose a threat to network security.

Moreover, the introduction of Execution Tickets, while innovative, may not fully address this issue. Multi-block MEV could lead to the formation of centralized ticket pools, with entities controlling a large number of consecutive blocks. This could further amplify concerns regarding network centralization. This complex dynamic suggests that if multi-block MEV does exist, it would require an entirely different model than the one proposed above to accurately factor in the greater-than-the-sum-of-its-parts effect inherent in multi-block control. The implications of such a scenario are existential and necessitate further investigation.

\subsection{Pricing Mechanisms Complexities}
The effectiveness of the execution ticket system hinges critically on the method employed for setting ticket prices. It is imperative to consider the implicit discount that will inevitably factor into the pricing strategy. This discount reflects the operational costs, risks, and the required profit margins for execution proposers. As a result, the total value obtained from the sale of tickets is unlikely to equate precisely to the EL Rewards, i.e., \(P_{ticket} < E[V_{ticket}]\) or, said differently, \(E[V_{ticket}] - P_{ticket} = \textit{Required Profit}\) where \(P_{ticket}\) is the price of a ticket.

Therefore, the pricing mechanism must be designed thoughtfully to capture as much value as possible for the protocol while still being attractive and viable for proposers. This delicate balance is key to ensuring that the protocol benefits maximally from ticket sales. An optimal pricing strategy would aim to minimize the gap between the expected value of the tickets and the practical selling price, which would reduce the potential value leakage into secondary markets and benefit arbitrageurs over the Ethereum protocol.

\subsection{Valuation Complexities}
Valuing tickets will be challenging due to the necessity of forecasting over an infinite time horizon. This issue is further complicated by the lumpiness of EL Rewards, i.e., the high variance of \( \mathcal{R} \), which may change over time (contrary to the assumption). This variance contributes to the variance in the value of the ticket as \(\text{Var}(V_{ticket})\) scales with \(\text{Var}(\mathcal{R})\).\footnote{See ~\ref{theorem:var-ticket}} Additionally, \(d\) will not remain static (contrary to the assumption), necessitating forecasts to account for events like rate hikes and cuts. 

Such complexity inevitably leads to the mispricing of tickets, raising concerns about whether this ticketing mechanism is purely theoretical and not suitable for practice. The primary concern lies in underpricing, which could suggest a failure to capture the intended EL Rewards. Conversely, overpricing (potentially more likely due to the speculative or gambling premium associated with the tickets) could result in capturing excess value (a scenario that might not be entirely negative since it redistributes rewards back to the protocol).

The emergence of ticket pooling can mitigate the impact of \( R \)'s variance. By distributing rewards and variance across a pool, the system can reduce pricing complexity and could lead to more accurate ticket valuations. More work is needed to formalize the impact of pooling.

\subsection{Challenges in Choosing the Parameter \(n\)}

The selection of \(n\), representing both the initial number of tickets issued and the ongoing number of outstanding tickets per block, emerges as a critical design parameter with significant implications for the system's dynamics and economic outcomes. This parameter directly influences the amount of value captured upfront by the protocol, the prices of the tickets, the cost of monopolizing the supply, and the simplicity of interpreting the market capitalization.

\subsubsection{Trade-offs of a Larger \(n\)}

A larger \(n\) enhances security by increasing the cost of owning a significant share of tickets as described in Theorem~\ref{theorem:cost-n}. This effectively protects against monopolizing block construction rights and increases the cost of centralization. Moreover, per Theorem~\ref{theorem:price-n}, a larger \(n\) results in a lower price per ticket, democratizing the ability to win the right to propose the execution payload. Less critically, a larger \(n\) holds vanity value as observed in Theorem~\ref{theorem:large-n}, which posits that, as \(n\) grows sufficiently large, the current market cap of all tickets approximates the present value of all future EL Rewards. Such a simplification is appealing.

However, the intricacies of managing a large \(n\) introduce several challenges. Primarily, a larger \(n\) prolongs the expected timeframe for any individual ticket to win per Theorem~\ref{theorem:time-n}, complicating the valuation and forecasting efforts for participants. This is particularly problematic given the assumption that \(\mu_{\mathcal{R}}\), the expected MEV, remains constant over time—an assumption that may not hold, further amplifying forecasting challenges. This uncertainty can impose a significant additional upfront discount on ticket values, allowing the secondary market to capture a large chunk of the ticket value. 

\subsubsection{Trade-offs of a Smaller \(n\)}

Conversely, opting for a smaller \(n\) shifts more value to later sales, providing market participants with more information and potentially stabilizing the ticket market, thereby enhancing primary market value capture. This approach, however, raises concerns regarding ticket accessibility and network centralization. According to Theorem~\ref{theorem:price-n}, a smaller \(n\) results in a higher price per ticket, potentially excluding average participants from direct participation in the ticket lottery due to financial constraints. More importantly, controlling a large percentage of the network would be less expensive according to Theorem~\ref{theorem:cost-n}, raising concerns about centralization.

\subsubsection{Implications of Pricing Mechanisms on \(n\)}

Different pricing mechanisms may be employed for the initial sale of \(n\) tickets and subsequent per-block ticket sales, each with varying efficiencies in capturing the true value of a ticket. A more effective initial sale mechanism may justify a larger \(n\), ensuring more value is captured upfront. In contrast, a superior per-block sale mechanism could justify a smaller \(n\), facilitating an effective value capture over the long term.

\section{Related Work}

This section contrasts the proposed Execution Tickets mechanism with established concepts like MEV-Burn and MEV-Smoothing, and highlights the unique advantages of the  Execution Tickets approach.

\subsection{Comparison with MEV-Burn}

\subsubsection{Simplification}
Neuder notes: ``The current version of mev-burn is tightly coupled with the attesting committee for a given slot. The burning mechanism in the execution ticket design is more straightforward..."\uft{https://ethresear.ch/t/execution-tickets/17944} 

\subsubsection{Timing and Value Capture}
In the simple MEV-Burn model,\uft{https://ethresear.ch/t/mev-burn-a-simple-design/15590} the amount burned equals the highest bid in the first \(D\) seconds of the auction. It is known that MEV bids increase monotonically over time (see ``timing games"\uft{https://ethresear.ch/t/timing-games-implications-and-possible-mitigations/17612}) because builders have more time to explore the problem space. Consequently, MEV-Burn fails to burn all potential MEV. The Execution Ticket model, conversely, captures the full MEV value, albeit in expectation.

\subsubsection{Credible Neutrality}
Another issue with MEV-Burn is its reliance on ePBS, and the problem lies in the very name---PBS is enshrined. This approach is opinionated and not credibly neutral.\uft{https://vitalik.eth.limo/general/2023/09/30/enshrinement.html} Better supply chains and auction systems might be discovered in the future. The ticketing system is agnostic to the method of MEV capture and simply presupposes that MEV value will flow to the execution proposer due to the proposer's inherent block space monopoly.

\subsubsection{Bypassability Issues}
A significant challenge with MEV-Burn is the ePBS bypassability problem.\uft{https://notes.ethereum.org/@mikeneuder/infinite-buffet} There is no credible method to ensure all proposers and builders use the protocol. The ticketing system avoids this issue by not dictating an MEV capture method. This flexibility renders the ticketing system less prone to circumvention compared to ePBS.

\subsection{Comparison with MEV-Smoothing}
MEV-Smoothing\uft{https://notes.ethereum.org/cA3EzpNvRBStk1JFLzW8qg} is a proposal designed to evenly distribute MEV captured across validators, aiming to reduce the variability in their earnings. This approach raises important philosophical questions about the extent to which a protocol should be involved in shaping reward distribution.

\subsubsection{Philosophical Considerations on Reward Distribution}
A fair argument can be made that the protocol should not intervene in artificially smoothing rewards, but allow for a free market where individuals and entities decide their own tolerance for variances in earnings. Those seeking lower variance in the proposed Execution Tickets mechanism can join ticket pools akin to the current staking pools offered by Lido\uft{https://lido.fi/} and Coinbase.\uft{https://www.coinbase.com/cbeth} However, it is important to consider the potential for centralization (like what is happening with liquid staking pools). Further research is necessary to determine the viability of decentralized ticket pools.

\section{Conclusion and Discussion}

This paper has investigated the potential of Ethereum's novel Execution Tickets mechanism for capturing EL Rewards within the protocol itself. The findings indicate that, when priced correctly, the tickets can indeed internalize all value generated from proposing execution payloads and redirect what was once validator revenue to the protocol. This is a significant breakthrough. It suggests that the contentious issue of MEV capture could potentially be resolved without necessitating ePBS.

The analysis reveals that a ticket represents a clean and elegant abstraction: a share of of future EL Rewards. Such a mechanism not only simplifies the economic model of network value distribution but also sets the foundation for a robust market where tickets serve as leading indicators of network value generation. Moreover, as these tickets are sold prior to a blocks construction, the revenue generated could bolster Ethereum's security budget in anticipation of high volatility events.

However, the success of this mechanism hinges on the efficacy of the ticket sale process. The protocol must be capable of selling tickets at their intrinsic value; otherwise, the value would inevitably leak into a secondary market.

The proposed lottery ticket system represents a promising approach to MEV management on the Ethereum network. By potentially funneling MEV back into the protocol, it aligns incentives, enhances network security, and creates a new asset class within the Ethereum ecosystem. While further research is required to refine the ticket sales mechanism and assess long-term impacts, the initial findings are undoubtedly encouraging, and signal a potential paradigm shift in the way blockchain protocols can harness MEV for systemic benefit.

\appendix
\section{Appendix}

\subsection{Variance in Value of an Individual Ticket}
\label{theorem:var-ticket}
    \begin{theorem}
    The variance in the value of a single ticket,
    
    \begin{align*}
    \text{Var}(V_{ticket}) &= \frac{\text{Var}(\mathcal{R})+\mu_{\mathcal{R}}^2}{nd^2+2nd+1} - \frac{\mu_{\mathcal{R}}^2}{n^2d^2 + 2nd + 1}
    \end{align*}
    
    \end{theorem}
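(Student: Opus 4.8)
The plan is to exploit the variance identity $\text{Var}(V_{ticket}) = E[V_{ticket}^2] - \bigl(E[V_{ticket}]\bigr)^2$ and to reuse as much of the expected-value theorem as possible. The subtracted term requires no new work: that theorem gives $E[V_{ticket}] = \frac{\mu_{\mathcal{R}}}{nd+1}$, so $\bigl(E[V_{ticket}]\bigr)^2 = \frac{\mu_{\mathcal{R}}^2}{(nd+1)^2} = \frac{\mu_{\mathcal{R}}^2}{n^2 d^2 + 2nd + 1}$, which already matches the second term of the claim. Hence the entire task reduces to computing the raw second moment $E[V_{ticket}^2]$ and showing it equals $\frac{\text{Var}(\mathcal{R}) + \mu_{\mathcal{R}}^2}{nd^2 + 2nd + 1}$.

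To compute the second moment, I would write the ticket value explicitly as $V_{ticket} = \frac{\mathcal{R}}{(1+d)^{W}}$, where $W$ is the geometrically distributed winning slot and $\mathcal{R}$ is the reward realized in that slot. Because rewards are assumed i.i.d. across slots and independent of the draw, $\mathcal{R}$ and $W$ are independent, so the square factorizes as $E[V_{ticket}^2] = E[\mathcal{R}^2] \cdot E\bigl[(1+d)^{-2W}\bigr]$. The first factor is standard, $E[\mathcal{R}^2] = \text{Var}(\mathcal{R}) + \mu_{\mathcal{R}}^2$, which supplies the numerator of the target term.

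The remaining factor is the only genuinely new computation, and it is the same ``modified geometric series'' used in the expected-value theorem but with $(1+d)^2$ in place of $(1+d)$. Conditioning on $W=t$ with $P(W=t) = \bigl(1-\frac{1}{n}\bigr)^{t-1}\frac{1}{n}$, I would sum $\sum_{t=1}^{\infty} \bigl(\frac{n-1}{n}\bigr)^{t-1} \frac{1}{n} (1+d)^{-2t}$ as a geometric series with common ratio $\frac{n-1}{n(1+d)^2}$, which converges since this ratio has absolute value less than one. The main bookkeeping step, and the only place an error is likely to creep in, is the final algebraic simplification: after summing, the denominator takes the form $n\bigl[(1+d)^2 - \frac{n-1}{n}\bigr]$, and one must expand $(1+d)^2 = 1 + 2d + d^2$ and cancel to confirm this equals $nd^2 + 2nd + 1$. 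Substituting both factors back and subtracting the squared mean then yields the stated formula, so I expect the geometric-series manipulation together with this denominator simplification to be the only real obstacle.
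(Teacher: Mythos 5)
Your proposal is correct and follows essentially the same route as the paper: both reduce the problem to the second moment via $\text{Var}(V_{ticket}) = E[V_{ticket}^2] - E[V_{ticket}]^2$, compute $E[V_{ticket}^2]$ by conditioning on the winning slot $W$ with $P(W=t) = \left(1-\frac{1}{n}\right)^{t-1}\frac{1}{n}$, and sum the same geometric series with ratio $\frac{n-1}{n(1+d)^2}$ to obtain the denominator $nd^2+2nd+1$. Your explicit factorization $E[V_{ticket}^2] = E[\mathcal{R}^2]\cdot E\bigl[(1+d)^{-2W}\bigr]$ is a slightly cleaner presentation of the independence the paper uses implicitly, but the computation is identical.
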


 \begin{proof}
    \begin{align*}
     \text{Var}(V_{ticket}) &= E[V_{ticket}^2] - E[V_{ticket}]^2 \\
     &= E[V_{ticket}^2] - \left(\mu_{\mathcal{R}} \left(\frac{1}{nd+1}\right)\right)^2 \\
     &= \frac{\text{Var}(\mathcal{R})+\mu_{\mathcal{R}}^2}{nd^2+2nd+1} - \frac{\mu_{\mathcal{R}}^2}{n^2d^2 + 2nd + 1}
    \end{align*}   

    Since,
\begin{align*}
    E[V_{ticket}^2] &= \sum_{t=1}^{\infty} P(W=t) \cdot E[V_{ticket}^2|W=t] \\
    &= \sum_{t=1}^{\infty}  \left(1 - \frac{1}{n}\right)^{t-1} \cdot \frac{1}{n} \cdot E\left[\left(\frac{R}{(1+d)^t}\right)^2\right]\\
    &= \sum_{t=1}^{\infty}  \left(1 - \frac{1}{n}\right)^{t-1} \cdot \frac{1}{n} \cdot \left(\frac{E[R^2]}{(1+d)^{2t}}\right)\\
    &= \sum_{t=1}^{\infty}  \left(1 - \frac{1}{n}\right)^{t-1} \cdot \frac{1}{n} \cdot \left(\frac{\text{Var}(\mathcal{R})+E[R]^2}{(1+d)^{2t}}\right)\\
    &= \sum_{t=1}^{\infty}  \left(1 - \frac{1}{n}\right)^{t-1} \cdot \frac{1}{n} \cdot \left(\frac{\text{Var}(\mathcal{R})+\mu_{\mathcal{R}}^2}{(1+d)^{2t}}\right)\\
    &= \frac{\text{Var}(\mathcal{R})+\mu_{\mathcal{R}}^2}{n(1+d)^2} \cdot \sum_{t=1}^{\infty}  \left(1 - \frac{1}{n}\right)^{t-1} \cdot \left(\frac{1}{(1+d)^{2t-2}}\right)\\
    &= \frac{\text{Var}(\mathcal{R})+\mu_{\mathcal{R}}^2}{n(1+d)^2} \cdot
    \sum_{t=1}^{\infty}\left(\frac{1-\frac{1}{n}}{(1+d)^2}\right)^{t-1}\\
    &= \frac{\text{Var}(\mathcal{R})+\mu_{\mathcal{R}}^2}{n(1+d)^2} \cdot \sum_{t=1}^{\infty}\left(\frac{n-1}{n(1+d)^2}\right)^{t-1} \\
    \end{align*}   
    \pagebreak
    \begin{align*}
    &= \frac{\text{Var}(\mathcal{R})+\mu_{\mathcal{R}}^2}{n(1+d)^2}\left(\frac{1}{1-\left(\frac{n-1}{n(1+d)^2}\right)}\right)\\
    &= \frac{\text{Var}(\mathcal{R})+\mu_{\mathcal{R}}^2}{n(1+d)^2}\left(\frac{n(1+d)^2}{n(1+d)^2 - (n-1)}\right) \\
    &= \frac{\text{Var}(\mathcal{R})+\mu_{\mathcal{R}}^2}{nd^2+2nd+1}
\end{align*}

Note: \(\sum_{t=1}^{\infty}\left(\frac{n-1}{n(1+d)^2}\right)^{t-1} 
= \frac{1}{1-\left(\frac{n-1}{n(1+d)^2}\right)} \) because \(|\frac{n-1}{n(1+d)^2}| < 1\) (this is a modified geometric series). 
 \end{proof}

\end{document}